\newtheorem{theor}{Theorem}
\theoremstyle{definition}
\newtheorem{state}[theor]{Proposition}
\newtheorem{cor}[theor]{Corollary}
\newtheorem{define}{Definition}
\newtheorem{example}{Example}
\theoremstyle{remark}
\newtheorem{rem}{Remark}
\newcommand{\cEv}{\partial} 
\newcommand{\BBR}{\mathbb{R}}
\newcommand{\BBZ}{\mathbb{Z}}
\newcommand{\cA}{\mathcal{A}}
\newcommand{\cE}{\mathcal{E}}
\newcommand{\cELiou}{{\cE}_{\text{\textup{Liou}}}}
\newcommand{\cEToda}{{\cE}_{\text{\textup{Toda}}}}
\newcommand{\cH}{\mathcal{H}}
\newcommand{\bc}{{\mathbf{c}}}
\newcommand{\bp}{{\boldsymbol{p}}}
\newcommand{\bq}{{\boldsymbol{q}}}
\newcommand{\bA}{{\boldsymbol{A}}}
\newcommand{\bun}{\mathbf{1}}
\newcommand{\gothg}{\mathfrak{g}}
\newcommand{\vph}{\varphi}
\newcommand{\Id}{{\mathrm d}}
\DeclareMathOperator{\img}{im}
\DeclareMathOperator{\sym}{sym}
\DeclareMathOperator{\End}{End}
\DeclareMathOperator{\CDiff}{\mathcal{C}Diff}
\DeclareMathOperator{\volume}{vol}
\newcommand{\lshad}{[\![}
\newcommand{\rshad}{]\!]}
\newcommand{\by}[1]{\textit{{#1}}}
\newcommand{\jour}[1]{\textit{{#1}}}
\newcommand{\vol}[1]{\textbf{{#1}}}
\newcommand{\book}[1]{\textrm{{#1}}}
\newcommand{\ib}[3]{ \{\!\{ {#1},{#2} \}\!\}_{{#3}} }
\title[Involutive distributions of operator\/-\/valued evolutionary
fields]%
{Involutive distributions of\\ operator\/-\/valued evolutionary vector fields\\ and their affine geometry}
\date{June 22, 2010} 
\author[A. V. Kiselev]{Arthemy V. Kiselev${}^*$}
\thanks{
        \textit{Address}:
Mathematical Institute, University of Utrecht, P.O.Box~80.010, 3508~TA Utrecht, The Netherlands.
\textit{E-mails}: [\texttt{A.V.Kiselev}, \texttt{J.W.vandeLeur}]%
\texttt{\symbol{"40}uu.nl}.}
\thanks{${}^*$\,Corresponding author. \textit{Current address}: 
SISSA, Via Bonomea~265, 34136~Trieste, Italy.}
\author[J. W. van de Leur]{Johan W. van de Leur}
\subjclass[2000]{
17B66, 
37K30, 
58A30; 
   secondary
17B80, 
37K05, 
47A62. 
}
\keywords{Involutive distributions, Lie algebroids, jet bundles, 
   brackets, Christoffel symbols} 
\begin{document}
\begin{abstract}
We generalize the notion of a Lie algebroid over infinite jet bundle by replacing the variational anchor with an $N$-\/tuple of differential operators whose images in the Lie algebra of evolutionary vector fields of the jet space are subject to collective commutation closure. The linear space of such operators 
becomes an algebra with bi\/-\/differential structural constants, 
of which we study the canonical structure. 
In particular, we show that these constants incorporate bi\/-\/differential analogues of Christoffel symbols.
\end{abstract}
\maketitle
\thispagestyle{empty}
\enlargethispage{0.7\baselineskip}

\subsection*{Introduction}
Lie algebroids~\cite{Vaintrob} are an important and convenient construction that appears, e.g., in classical Poisson dynamics~\cite{PoissonGroupoid} or the theory of quantum Poisson manifolds~\cite{AKZS,Voronov}. Essentially, Lie algebroids extend the tangent bundle~$TM$ over a smooth manifold~$M$, retaining the information about the $C^\infty(M)$-\/module structure for its sections. 
In the paper~\cite{Leiden} we defined the Lie algebroids over the infinite jet spaces for mappings between smooth manifolds (e.g., from strings to space\/-\/time); the classical definition~\cite{Vaintrob} is recovered
by shrinking the source manifold to a point. 
A special case of Lie algebroids over spaces of finite jets for sections
of the tangent bundle was first considered in~\cite{Kumpera}.
Within the variational setup, the anchors become linear matrix differential operators 
that map sections which belong to horizontal modules~\cite{Topical} to the generating sections~$\vph$ of evolutionary derivations~$\cEv_\vph$ on the jet space; by assumption, the images of such anchors are closed under commutation in the Lie algebra of evolutionary vector fields. The two main examples of variational anchors are the recursions with involutive images (\cite{d3Bous}) and the Hamiltonian operators (see~\cite{Olver,Opava,Topical} and~\cite{d3Bous}) whose domains consist of the variational vectors and covectors, respectively.

In~\cite{d3Bous} we studied the \textsl{linear compatibility} of variational anchors, meaning that $N$ 
operators with the common domain span the $N$-\/dimensional linear space~$\cA$ such that each point~$A_{\boldsymbol{\lambda}}\in\cA$ is itself an anchor with involutive image. For example, 
Poisson compatible 
Hamiltonian operators are linear compatible and \textit{vice versa}
(Hamiltonian operators are \textsl{Poisson compatible} if
their linear combinations remain Hamiltonian).
The linear compatibility\footnote{When the set of admissible linear combinations 
$\{\boldsymbol{\lambda}\}\subsetneq\BBR^N$ has punctures near which the homomorphisms~$A_{\boldsymbol{\lambda}}$ exhibit a nontrivial analytic behaviour, this 
concept 
reappears in the theory of continuous contractions of Lie algebras (see~\cite{Contractions} and references therein).}
allows us to reduce the case of many operators~$A_1$,\ $\ldots$, $A_N$ to one operator~$A_{\boldsymbol{\lambda}}=\sum\lambda_i\cdot A_i$ with the same properties.

In this paper 
we introduce a different notion of compatibility for the~$N$ operators. Strictly speaking, we consider the class of structures which is wider than the set of Lie algebroids over the jet spaces. Namely, we relax the assumption that each operator alone is a variational anchor, but, instead, we deal with $N$-tuples of total differential operators~$A_1$,\ $\ldots$,\ $A_N$ whose images are subject to the collective commutation closure:
\[
\Bigl[\sum_{i=1}^N\img A_i,\sum_{j=1}^N\img A_j\Bigr]\subseteq
\sum_{k=1}^N\img A_k.
\]
This involutivity condition converts the linear space 
of operators to an algebra with bi\/-\/differential structural constants~$\bc^k_{ij}$, see~\eqref{DefC} below.
The Magri scheme~\cite{Magri} for the restrictions of compatible Hamiltonian operators to the hierarchy of Hamiltonians yields an example of such overlapping for~$N=2$ with $\bc^k_{ij}\equiv0$.

We study the standard decomposition of the structural constants~$\bc_{ij}^k$, which is similar to the previously known 
case~\eqref{EqDefFrob} for~$N=1$ 
(\cite{SymToda,d3Bous,Leiden}). 
From the bi\/-\/differential constants~$\bc^k_{ij}$ 
we extract the components~$\Gamma^k_{ij}$
that act by total differential operators on both arguments 
at once. 
Our main result, 
Theorem~\ref{GammaIsChristoffel}, states that, under 
a change of coordinates in the domain, the symbols~$\Gamma^k_{ij}$ are
transformed by a proper analogue~\eqref{TransformGamma}
of the classical rule $\Gamma\mapsto g\,\Gamma g^{-1}+\Id g\,g^{-1}$ for
the connection $1$-\/forms $\Gamma$ and reparametrizations~$g$.
We note 
that the bi\/-\/differential symbols~$\Gamma_{ij}^k$
are symmetric in lower indices if the common domain of the~$N$ operators~$A_i$ consists of the variational covectors and hence its elements acquire 
their own odd grading.\footnote{Throughout this paper we deal with a purely commutative setup, refraining from the treatment of super\/-\/manifolds. However, we emphasize that, on a super\/-\/manifold, the two notions of \textsl{parity} and \textsl{grading} (or \textsl{weight}) 
may be totally uncorrelated, see~\cite{Voronov}.
}

This note is organized as follows.
In section~\ref{SecStrong} we introduce the operators with 
collective commutation closure. For consistency, we recall here the cohomological 
formulation~\cite{JK1988}
 of the Magri scheme which gives us an example.
In section~\ref{SecGamma} we study the properties of the bi\/-\/differential constants that appear in such algebras of operators. The analogues of Christoffel symbols emerge 
here; as an example, we calculate them for the symmetry algebra of the Liouville equation.

\section{Compatible differential operators}\label{SecStrong}
\noindent%
We begin with some notation; for a more detailed exposition of the 
geometry of integrable systems we refer to~\cite{Olver} 
and~\cite{Dubr,Opava,ClassSym,Manin1979}.
In the sequel, the ground field is the field~$\BBR$ of real numbers and all mappings are $C^\infty$-\/smooth.

Let $\pi\colon E^{m+n}\xrightarrow[M^m]{}B^n$ be a vector bundle over an 
orientable $n$-\/dimensional manifold~$B^n$ and, 
similarly, let $\xi\colon N^{d+n}\xrightarrow[{}\quad{}
]{}B^n$ be another linear bundle over~$B^n$. Consider the bundle $\pi_\infty\colon J^\infty(\pi)\to B^n$ of infinite jets of sections for the bundle~$\pi$ and take the pull\/-\/back $\pi_\infty^*(\xi)\colon N^{d+n}\mathbin{{\times}_{B^n}}J^\infty(\pi)\to J^\infty(\pi)$ of the bundle~$\xi$ along~$\pi_\infty$. By definition, the $C^\infty(J^\infty(\pi))$-\/module of sections $\Gamma\bigl(\pi_\infty^*(\xi)\bigr)=\Gamma(\xi)\mathbin{{\otimes}_{C^\infty(B^n)}}C^\infty(J^\infty(\pi))$ is called~\textsl{horizontal}, see~\cite{Topical} for further details.

For example, let~$\xi\mathrel{{:}{=}}\pi$. Then the variational vectors~$\vph\in\Gamma\bigl(\pi_\infty^*(\pi)\bigr)$ are the generating sections of evolutionary derivations~$\cEv_\vph$ on~$J^\infty(\pi)$. For convenience, we shall use the shorthand notation~$\varkappa(\pi)\equiv\Gamma\bigl(\pi_\infty^*(\pi)\bigr)$ and $\Gamma\Omega\bigl(\xi_\pi\bigr)\equiv\Gamma\bigl(\pi_\infty^*(\xi)\bigr)$.

Let us consider first the case $N=1$ when there is only one total 
differential operator 
$A\colon\Gamma\Omega(\xi_\pi)\to\varkappa(\pi)$ 
with involutive image: 
\begin{equation}\label{EqDefFrob}
[\img A,\img A]\subseteq\img A.
\end{equation}
The operator~$A$ transfers the bracket in the Lie 
algebra~$\gothg(\pi)=\bigl(\varkappa(\pi),[\,,\,]\bigr)$
to the Lie algebra structure $[\,,\,]_A$ on the quotient of
its domain by the kernel. The standard decomposition of this bracket
is~\cite{d3Bous,Leiden}
\begin{equation}\label{EqOplusBKoszul} 
[\bp,\bq]_A=\cEv_{A(\bp)}(\bq)-\cEv_{A(\bq)}(\bp)+ \{\!\{\bp,\bq\}\!\}_A.
\qquad \bp,\bq\in\Gamma\Omega(\xi_\pi).
\end{equation}
The \textsl{linear compatibility} of operators~\eqref{Ops}, 
which means that their arbitrary linear 
combinations~$A_{\boldsymbol{\lambda}}=\sum_i\lambda_i\cdot A_i$
satisfy~\eqref{EqDefFrob}, reduces the case of~$N\geq2$ operators to the previous case with~$N=1$ as follows.

\begin{theor}[\cite{d3Bous}]\label{ThLinear}
The bracket $\ib{\,}{\,}{A_{\boldsymbol{\lambda}}}$ induced by
the combination $A_{\boldsymbol{\lambda}}=\sum_i\lambda_i\cdot A_i$
on the domain of linear compatible normal{\,}\footnote{By definition,
a total differential operator~$A$ is \textsl{normal} if
$A\circ\nabla=0$ implies $\nabla=0$; in other words, it may be that
$\ker A\neq0$, but the kernel does not have any functional freedom for its elements, see~\cite{SymToda}.}
operators $A_i$ is
\[
\ib{\,}{\,}{\sum\limits_{i=1}^N\lambda_i A_i}=
 \sum_{i=1}^N\lambda_i\cdot\ib{\,}{\,}{A_i}.
\]
The pairwise linear compatibility implies the collective linear compatibility
of $A_1,\ldots,A_N$. 
\end{theor}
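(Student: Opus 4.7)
I would prove both assertions in one stroke, with the explicit formula for $\ib{\,}{\,}{A_{\boldsymbol{\lambda}}}$ simultaneously producing a preimage of $[A_{\boldsymbol{\lambda}}(\bp),A_{\boldsymbol{\lambda}}(\bq)]$ under $A_{\boldsymbol{\lambda}}$, hence certifying collective involutivity.

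First I would recall that, by normality of each $A_i$, the induced bracket $\ib{\bp}{\bq}{A_i}$ is characterized (modulo kernel elements rendered trivial by normality) by the homomorphism identity $A_i(\ib{\bp}{\bq}{A_i})=[A_i(\bp),A_i(\bq)]$ in $\gothg(\pi)$. I then propose the candidate $\br:=\sum_{i=1}^N\lambda_i\,\ib{\bp}{\bq}{A_i}$ and compute
\[
A_{\boldsymbol{\lambda}}(\br)=\sum_{i,k=1}^N\lambda_i\lambda_k\,A_k(\ib{\bp}{\bq}{A_i}).
\]
The diagonal piece $(i=k)$ already contributes $\sum_i\lambda_i^2\,[A_i(\bp),A_i(\bq)]$, so the whole argument hinges on the symmetric cross\/-\/identity
\[
A_i(\ib{\bp}{\bq}{A_j})+A_j(\ib{\bp}{\bq}{A_i})=[A_i(\bp),A_j(\bq)]+[A_j(\bp),A_i(\bq)].
\]

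To establish this identity I would apply the homomorphism identity to the two\/-\/parameter family $B_{\mu,\nu}:=\mu A_i+\nu A_j$, an anchor for every $(\mu,\nu)\in\BBR^2$ by pairwise compatibility. Expanding $B_{\mu,\nu}(\ib{\bp}{\bq}{B_{\mu,\nu}})=[B_{\mu,\nu}(\bp),B_{\mu,\nu}(\bq)]$ as a polynomial in $(\mu,\nu)$ and isolating the coefficient of $\mu\nu$ (with normality used to match representatives on both sides) yields the cross\/-\/identity. Substituting back,
\[
A_{\boldsymbol{\lambda}}(\br)=\sum_{i,j=1}^N\lambda_i\lambda_j\,[A_i(\bp),A_j(\bq)]=[A_{\boldsymbol{\lambda}}(\bp),A_{\boldsymbol{\lambda}}(\bq)],
\]
so that $[A_{\boldsymbol{\lambda}}(\bp),A_{\boldsymbol{\lambda}}(\bq)]\in\img A_{\boldsymbol{\lambda}}$ for arbitrary $\boldsymbol{\lambda}$, which is the second assertion of the theorem. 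Invoking normality of $A_{\boldsymbol{\lambda}}$ one last time forces $\br=\ib{\bp}{\bq}{A_{\boldsymbol{\lambda}}}$, giving the claimed linearity formula.

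The delicate step is the cross\/-\/identity: pairwise compatibility is only a set\/-\/theoretic inclusion $[\img B_{\mu,\nu},\img B_{\mu,\nu}]\subseteq\img B_{\mu,\nu}$ with, a priori, no control on how the preimages depend on $(\mu,\nu)$. The normality hypothesis on the family is what rigidifies the choice of preimage and permits the bilinear coefficient extraction; without it the identity would hold only modulo $\ker A_i+\ker A_j$, which would spoil the clean additivity of the bracket and leave the purported equality $\br=\ib{\bp}{\bq}{A_{\boldsymbol{\lambda}}}$ unprovable.
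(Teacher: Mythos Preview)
Your argument rests on a false identity: you assert $A_i\bigl(\ib{\bp}{\bq}{A_i}\bigr)=[A_i(\bp),A_i(\bq)]$, but $\ib{\,}{\,}{A}$ is \emph{not} the full induced bracket. By~\eqref{EqOplusBKoszul} the bracket transferred from $\gothg(\pi)$ is $[\bp,\bq]_A=\cEv_{A(\bp)}(\bq)-\cEv_{A(\bq)}(\bp)+\ib{\bp}{\bq}{A}$, and it is \emph{this} object for which $A\bigl([\bp,\bq]_A\bigr)=[A(\bp),A(\bq)]$. Consequently your ``diagonal piece'' is wrong---$\lambda_i^2\,A_i\bigl(\ib{\bp}{\bq}{A_i}\bigr)\neq\lambda_i^2\,[A_i(\bp),A_i(\bq)]$---and your cross\/-\/identity is mis\/-\/stated: the $\mu\nu$-coefficient of~\eqref{EqOplusBKoszul} written for $B_{\mu,\nu}$ also contains the evolutionary cross\/-\/terms $A_i\bigl(\cEv_{A_j(\bp)}(\bq)\bigr)+A_j\bigl(\cEv_{A_i(\bp)}(\bq)\bigr)-A_i\bigl(\cEv_{A_j(\bq)}(\bp)\bigr)-A_j\bigl(\cEv_{A_i(\bq)}(\bp)\bigr)$, not only $A_i\bigl(\ib{\bp}{\bq}{A_j}\bigr)+A_j\bigl(\ib{\bp}{\bq}{A_i}\bigr)$. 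A second, independent gap is your closing appeal to ``normality of $A_{\boldsymbol{\lambda}}$'': the hypothesis is that each $A_i$ is normal, and normality is not in general preserved under linear combination, so this step is unjustified as written.

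The overall strategy is salvageable. Replace $\ib{\,}{\,}{A_i}$ by the full bracket $[\,,\,]_{A_i}$ in your candidate $\br$ and in the cross\/-\/identity; the diagonal and cross computations then become correct, and after cancelling the evolutionary terms on both sides you recover the desired linearity of $\ib{\,}{\,}{A_{\boldsymbol{\lambda}}}$. For the coefficient extraction you still need that $\ib{\bp}{\bq}{B_{\mu,\nu}}$ depends polynomially on $(\mu,\nu)$; this is exactly what the paper supplies by noting that the commutator is quadratic homogeneous in~$\boldsymbol{\lambda}$ while the evolutionary part of~\eqref{EqOplusBKoszul} already contributes quadratically, forcing $\ib{\,}{\,}{A_{\boldsymbol{\lambda}}}$ to be linear. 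Finally, for the uniqueness step do not invoke normality of $A_{\boldsymbol{\lambda}}$: specialize $\boldsymbol{\lambda}=e_\ell$ as the paper does, so that the ambiguity lands in $\ker A_\ell$ and is eliminated by normality of the individual~$A_\ell$.
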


\begin{proof}
Consider the commutator $\bigl[\sum_i\lambda_iA_i(\bp),\sum_j\lambda_j A_j(\bq)\bigr]$, here $\bp,\bq\in\Gamma\Omega(\xi_\pi)$.
On one hand, it is equal to
\begin{align}
{}&=\sum_{i\neq j}\lambda_i\lambda_j\bigl[A_i(\bp),A_j(\bq)\bigr]+
  \sum_i\lambda_i^2 A_i\bigl(\cEv_{A_i(\bp)}(\bq)-\cEv_{A_i(\bq)}(\bp)+\ib{\bp}{\bq}{A_i}\bigr).
\label{SqLambdaApart}\\
\intertext{On the other hand, the linear compatibility of~$A_i$ implies}
{}&=A_{\boldsymbol{\lambda}} 
\bigl(\cEv_{A_{\boldsymbol{\lambda}} 
(\bp)}(\bq)\bigr) - A_{\boldsymbol{\lambda}}
\bigl(\cEv_{A_{\boldsymbol{\lambda}}   
(\bq)}(\bp)\bigr)  +  A_{\boldsymbol{\lambda}}   
\bigl(\ib{\bp}{\bq}{A_{\boldsymbol{\lambda}}   
}\bigr).\notag
\end{align}
The entire commutator is quadratic homogeneous in~$\boldsymbol{\lambda}$, whence the
bracket $\ib{\,}{\,}{A_{\boldsymbol{\lambda}}}$ is linear 
in~$\boldsymbol{\lambda}$.
From~\eqref{SqLambdaApart} we see that the individual brackets~$\ib{\,}{\,}{A_i}$
are contained in~it. Therefore,
\[
\ib{\bp}{\bq}{A_{\boldsymbol{\lambda}}}=\sum_\ell\lambda_\ell\cdot\ib{\bp}{\bq}{A_\ell}+
  \sum_\ell\lambda_\ell\cdot\gamma_\ell(\bp,\bq),
\]
where $\gamma_\ell\colon\Gamma\Omega(\xi_\pi)
\times\Gamma\Omega(\xi_\pi)\to\Gamma\Omega(\xi_\pi)$. 

We claim that all summands $\gamma_\ell(\cdot,\cdot)$,
which do not depend on $\boldsymbol{\lambda}$ at all, vanish. Indeed, assume the converse.
Let there be $\ell\in[1,\ldots,N]$ such that $\gamma_\ell(p,q)\neq0$; without loss
of generality, suppose $\ell=1$. Then set $\boldsymbol{\lambda}=(1,0,\ldots,0)$, whence
\begin{multline*}
\Bigl[\sum_i\lambda_iA_i(\bp),\sum_j\lambda_jA_j(\bq)\Bigr]=
\Bigl[\bigl(\lambda_1A_1\bigr)(\bp),\bigl(\lambda_1A_1\bigr)(\bq)\Bigr]
=\bigl(\lambda_1A_1\bigr)\bigl(\lambda_1\gamma_1(\bp,\bq)\bigr)\\
{}+\bigl(\lambda_1A_1\bigr)\Bigl(\cEv_{(\lambda_1A_1)(\bp)}(\bq)-
   \cEv_{(\lambda_1A_1)(\bq)}(\bp)+ 
    \lambda_1\ib{\bp}{\bq}{A_1} 
\Bigr)=\lambda_1A_1\bigl(\lambda_1[\bp,\bq]_{A_1}\bigr).
\end{multline*}
Consequently, $\gamma_\ell(p,q)\in\ker A_\ell$ for all $\bp$ and~$\bq$.
Now we use the 
assumption that each operator~$A_\ell$ is normal.
This implies that $\gamma_\ell=0$ for all $\ell$, 
which concludes the proof.
\end{proof}

Now we let~$N>1$ and consider $N$-\/tuples of linear 
total differential operators 
\begin{equation}\label{Ops}
A_1,\ldots,A_N\colon\Gamma\Omega\bigl(\xi_\pi\bigr)\longrightarrow\varkappa(\pi),
\end{equation}
whose images in the Lie algebra~$\gothg(\pi)$ of evolutionary vector fields
on~$J^\infty(\pi)$ are subject to collective commutation 
closure. 

\begin{define}
We say that $N\geq2$ total differential operators~\eqref{Ops}
are \textsl{strong compatible}
if the sum of their images is closed under commutation 
in the Lie algebra~$\gothg(\pi)=\bigl(\varkappa(\pi),[\,,\,]\bigr)$ of evolutionary vector fields,
\begin{equation}\label{CommutClosure}
\Bigl[\sum\limits_i\img A_i,\sum\limits_j\img A_j\Bigr]\subseteq
\sum\limits_k\img A_k, \qquad 1\leq i,j,k\leq N.
\end{equation}
The involutivity~\eqref{CommutClosure} gives rise to the
bi\/-\/differential operators~$\bc^k_{ij}\colon
\Gamma\Omega(\xi_\pi)\times\Gamma\Omega(\xi_\pi)\to\Gamma\Omega(\xi_\pi)$ through
\begin{equation}\label{DefC}
\bigl[A_i(\bp),A_j(\bq)\bigr]=\sum_k A_k\bigl(\bc^k_{ij}(\bp,\bq)\bigr),\qquad
 \bp,\bq\in\Gamma\Omega(\xi_\pi).
\end{equation}
The structural constants~$\bc^k_{ij}$ absorb the bi\/-\/differential action 
on~$\bp,\bq$ under the commutation in the images of the operators.
\end{define}

\begin{rem}
If $N=1$ and there is a unique operator~$A\colon\Gamma\Omega(\xi_\pi)\to\varkappa(\pi)$ satisfying~\eqref{EqDefFrob}, then we recover the definition of the variational anchor in the Lie algebroid over the infinite jet space~$J^\infty(\pi)$, see~\cite{Leiden}. 
By construction, $\bc^1_{11}\equiv [\,,\,]_{A_1}$ if $N=1$.
However, for $N>1$ we obtain a wider class of structures because 
we do not assume that the image of each operator~$A_i$ alone is involutive, 
therefore it may well occur that~$\bc^k_{ii}\neq0$ for some~$k\neq i$.
\end{rem}

The Magri scheme~\cite{Magri} for the restriction of two compatible
Hamiltonian operators $A_1,A_2$ 
onto the commutative hierarchy of the descendants 
$\cH_i$ 
of the Casimirs $\cH_0$ for~$A_1$ 
gives us an example of~\eqref{CommutClosure}
with~$N=2$ and~$\bc^k_{ij}\equiv0$.
Let us consider it in more detail;
from now on, we standardly identify the Hamiltonian operators~$A$ with the variational Poisson bi\/-\/vectors~$\bA$, see~\cite{Topical}. 
We recall that the variational Schouten bracket~$\lshad\,,\,\rshad$
of such 
bi\/-\/vectors satisfies the 
Jacobi identity
\begin{equation}\label{Jacobi4Schouten}
\lshad\lshad\bA_1,\bA_2\rshad,\bA_3\rshad+
\lshad\lshad\bA_2,\bA_3\rshad,\bA_1\rshad+
\lshad\lshad\bA_3,\bA_1\rshad,\bA_2\rshad=0.
\end{equation}
Hence the defining property 
$\lshad\bA,\bA\rshad 
=0$ for a Poisson bi\/-\/vector
~$\bA$ implies that $\Id_{A}=\lshad\bA,{\cdot}\,\rshad$ is a
differential, giving rise to the Poisson cohomology $H_A^k$. 
Obviously, the Casimirs $\cH_0$ 
such that $\lshad\bA,\cH_0
\rshad=0$ for a Poisson bi\/-\/vector
~$\bA$ constitute the group~$H_A^0$. 

\begin{theor}[
\cite{JK1988,Magri}]\label{ThMagri} 
Suppose $\lshad\bA_1,\bA_2\rshad=0$, $\cH_0\in H_{A_1}^0$ 
is a Casimir of~$\bA_1$, and the first Poisson cohomology w.r.t.
$\Id_{A_1}=\lshad\bA_1,{\cdot}\,\rshad$ vanishes. 
Then for any $k>0$ there is a Hamiltonian $\cH_k$ 
such that
\begin{equation}\label{MagriResolve}
\lshad\bA_2,\cH_{k-1}\rshad=\lshad\bA_1,\cH_k\rshad.
\end{equation}
Put $\vph_k\mathrel{{:}{=}}A_1\bigl(\delta/\delta u
(\cH_k)\bigr)$ such that $\cEv_{\vph_k}=\lshad\bA_1,\cH_k\rshad$.
The Hamiltonians $\cH_i$, $i\geq0$, pairwise Poisson commute w.r.t.\ either
$A_1$ or $A_2$, the densities of~$\cH_i$ are conserved on any
equation $u_{t_k}=\vph_k$, and the evolutionary derivations
$\cEv_{\vph_k}$ pairwise commute for all~$k\geq0$.
\end{theor}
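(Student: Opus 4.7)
The four assertions decouple and we would prove them in the order they are stated, with the core of the argument being the first (existence of the ladder) and the second (bi\/-\/Poisson commutativity); assertions three and four then follow mechanically.

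For the existence of $\cH_k$, the plan is induction on $k$, the base $k=0$ being given. Assume $\cH_0,\ldots,\cH_{k-1}$ satisfy~\eqref{MagriResolve}. The 1\/-\/vector $\lshad\bA_2,\cH_{k-1}\rshad$ is a candidate for $\Id_{A_1}\cH_k$, so by the hypothesis $H^1_{A_1}=0$ it suffices to show that it is $\Id_{A_1}$-closed. Here the graded Jacobi identity for $\lshad\,,\,\rshad$, together with $\lshad\bA_1,\bA_2\rshad=0$, gives
\[
\lshad\bA_1,\lshad\bA_2,\cH_{k-1}\rshad\rshad =
\pm\lshad\bA_2,\lshad\bA_1,\cH_{k-1}\rshad\rshad
=\pm\lshad\bA_2,\lshad\bA_2,\cH_{k-2}\rshad\rshad,
\]
where the second equality uses the inductive relation~\eqref{MagriResolve} at level $k{-}1$; for $k=1$ one uses $\lshad\bA_1,\cH_0\rshad=0$ instead. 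The remaining expression vanishes because $\lshad\bA_2,\bA_2\rshad=0$, again by graded Jacobi. Thus $\lshad\bA_2,\cH_{k-1}\rshad\in\ker\Id_{A_1}$ and a preimage $\cH_k$ exists.

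For the pairwise Poisson commutativity,
I would use the classical Lenard shift: denote $\{\cF,\cG\}_\alpha\mathrel{{:}{=}}\lshad\lshad\bA_\alpha,\cF\rshad,\cG\rshad$. From~\eqref{MagriResolve} one gets the key swap
\[
\{\cH_i,\cH_j\}_1 = \{\cH_{i-1},\cH_j\}_2, \qquad
\{\cH_i,\cH_j\}_2 = \{\cH_{i+1},\cH_j\}_1,
\]
valid for $i\geq 1$ and $i\geq 0$ respectively. Combining these with the graded antisymmetry of $\{\,,\,\}_\alpha$ yields the relation $\{\cH_i,\cH_j\}_1 = -\{\cH_{i+1},\cH_{j-1}\}_1$, so after $j{-}i$ iterations the bracket is reduced either to $\{\cH_m,\cH_m\}_1=0$ or to $\{\cH_0,\cH_{i+j}\}_1 = \lshad\lshad\bA_1,\cH_0\rshad,\cH_{i+j}\rshad=0$, using the Casimir property of $\cH_0$. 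The same telescoping, shifted by one, gives $\{\cH_i,\cH_j\}_2=0$.

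The last two claims are immediate consequences of step two. For conservation, the time derivative of $\cH_i$ along $u_{t_k}=\vph_k$ coincides, up to divergences, with $\{\cH_k,\cH_i\}_1=0$. For the commutativity of the flows, one applies once more the graded Jacobi identity:
\[
[\cEv_{\vph_k},\cEv_{\vph_\ell}]=\cEv_{\lshad\lshad\bA_1,\cH_k\rshad,\lshad\bA_1,\cH_\ell\rshad\rshad}=\cEv_{\lshad\bA_1,\{\cH_k,\cH_\ell\}_1\rshad}=0,
\]
the middle equality following from $\lshad\bA_1,\bA_1\rshad=0$. The main technical obstacle I anticipate is bookkeeping of the Koszul signs in the graded Schouten bracket on the horizontal jet space and, slightly less trivially, justifying that the inductive preimage $\cH_k$ can be chosen as a genuine local functional rather than only modulo elements of $H^0_{A_1}$; the freedom of adding a Casimir of $\bA_1$ to $\cH_k$ does not affect any of the conclusions above, as all four statements involve $\cH_k$ only through $\lshad\bA_1,\cH_k\rshad$ or through Poisson brackets with other members of the hierarchy.
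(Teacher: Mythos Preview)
Your existence argument is essentially the same as the paper's: induction on $k$, with the base case using the Casimir property $\lshad\bA_1,\cH_0\rshad=0$ and the inductive step combining the Jacobi identity~\eqref{Jacobi4Schouten}, the compatibility $\lshad\bA_1,\bA_2\rshad=0$, the previous rung of the ladder, and $\lshad\bA_2,\bA_2\rshad=0$ to show $\Id_{A_1}$-closedness, then invoking $H^1_{A_1}=0$. The paper in fact labels its proof ``Standard proof of existence'' and stops there; it does not write out the remaining three assertions (bi-Poisson commutativity, conservation, commutativity of the flows), treating them as standard material from~\cite{JK1988,Magri}. Your Lenard-shift telescoping and the derivation of the last two claims from it are the classical route and are correct in outline; just note that with the usual conventions the shift reads $\{\cH_i,\cH_j\}_1=\{\cH_{i+1},\cH_{j-1}\}_1$ without the extra sign, so the iteration terminates at a diagonal bracket directly rather than via an alternating argument---this is exactly the ``Koszul-sign bookkeeping'' you flag.
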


\begin{proof}[Standard proof of existence]
The main homological equality~\eqref{MagriResolve} is established by
induction on~$k$. Starting with a Casimir~$\cH_0$, we obtain
\[
0=\lshad\bA_2,0\rshad=\lshad\bA_2,\lshad\bA_1,\cH_0\rshad\rshad =
 -\lshad\bA_1,\lshad\bA_2,\cH_0\rshad\rshad\mod\lshad\bA_1,\bA_2\rshad=0,
\]
using the Jacobi identity~\eqref{Jacobi4Schouten}.
The first Poisson cohomology $H^1_{A_1}
=0$ is
trivial by an assumption of the theorem, hence the closed element
$\lshad\bA_2,\cH_0\rshad$ in the kernel of $\lshad\bA_1,\cdot\rshad$
is exact: $\lshad\bA_2,\cH_0\rshad=\lshad\bA_1,\cH_1\rshad$
for some~$\cH_1$.
For~$k\geq1$, we have
\[
\lshad\bA_1,\lshad\bA_2,\cH_k\rshad\rshad =
-\lshad\bA_2,\lshad\bA_1,\cH_k\rshad\rshad =
-\lshad\bA_2,\lshad\bA_2,\cH_{k-1}\rshad\rshad =0
\]
using
~\eqref{Jacobi4Schouten}
and by $\lshad\bA_2,\bA_2\rshad=0$.
Consequently, by~$H^1_{A_1}
=0$ we have that $\lshad\bA_2,\cH_k\rshad=\lshad\bA_1,\cH_{k+1}\rshad$, 
and we thus proceed infinitely.
\end{proof}


We see now that the inductive step --- the existence of the $(k+1)$-st Hamiltonian functional in involution~--- is possible if and only if $H_0$ is a Casimir,\footnote{The 
Magri scheme starts from any two Hamiltonians
$\cH_{k-1},\cH_k
$ that satisfy~\eqref{MagriResolve},
but we 
operate with maximal subspaces of the space of functionals
such that the sequence $\{\cH_k\}$ 
can not be extended with~$k<0$.}
and therefore the operators $A_1$~and $A_2$~are restricted onto the linear subspace which is spanned in the space of variational covectors
by the Euler derivatives of the descendants of~$\cH_0$, 
i.e., of the Hamiltonians of the hierarchy. 
We note that 
the image under~$A_2$ of a generic section from
the domain of operators~$A_1$ and~$A_2$
can not be resolved w.r.t.\ $A_1$ by~\eqref{MagriResolve}.
For example, the first and second Hamiltonian structures for the KdV equation,  
which equal, respectively, $A_1=\Id/\Id x$ and 
$A_2=-\tfrac{1}{2}\tfrac{\Id^3}{\Id x^3}+2u\tfrac{\Id}{\Id x}+u_x$,
are not strong
compatible unless they are restricted onto some subspaces of their
arguments. On the linear subspace of descendants of the Casimir 
$\int u\,\Id x$,
we have $\text{im}\,A_2\subset\text{im}\,A_1$ and, since the
image of the Hamiltonian operator~$A_1$ is involutive, we conclude
that $[\text{im}\,A_1$,\ $\text{im}\,A_2]\subset\text{im}\,A_1$.

On the other hand, the strong compatibility of the restrictions of Poisson compatible operators~$A_1$ and~$A_2$ onto the hierarchy is valid 
since their images are commutative Lie algebras. 
Regarding the converse statement as a potential generator of multi\/-\/dimensional completely integrable systems, we formulate the open problem:
Is the strong compatibility of Poisson 
compatible Hamiltonian operators achieved 
\textsl{only} for their restrictions onto the hierarchies of Hamiltonians in involution so that the bi\/-\/differential constants~$\bc^k_{ij}$ 
necessarily vanish\,?

\section{Bi\/-\/differential Christoffel symbols}\label{SecGamma}
\noindent
Similarly to~\eqref{EqOplusBKoszul}, 
we extract the total bi\/-\/differential parts 
of the structural constants~$\bc^k_{ij}$ in~\eqref{DefC}
and obtain
\begin{equation}\label{DefGamma}
\bc^k_{ij}=\cEv_{A_i(\bp)}(\bq)\cdot\delta^k_j
 -\cEv_{A_j(\bq)}(\bp)\cdot\delta^k_i+\Gamma^k_{ij}(\bp,\bq),\qquad 
\bp,\bq\in\Gamma\Omega(\xi_\pi),
\end{equation}
where $\Gamma^k_{ij}\in\CDiff\bigl(\Gamma\Omega(\xi_\pi)\times\Gamma\Omega(\xi_\pi)\to
\Gamma\Omega(\xi_\pi)\bigr)$
and $\delta^k_i$,\ $\delta^k_j$ are the Kronecker delta symbols.
By definition, the three indices in~$\Gamma^k_{ij}$ match the
respective operators~$A_i,A_j,A_k$ in~\eqref{DefC}.
(The total number of the indices is much greater than~three; 
moreover, the proper upper or lower location of the omitted indices
depends on the (co)vector nature of the domain~$\Gamma\Omega(\xi_\pi)$.)
Obviously, the convention 
\[\Gamma^1_{11}=\ib{\,}{\,}{A_1}\]
holds if~$N=1$.
At the same time, for fixed $i,j,k$, the symbol~$\Gamma^k_{ij}$ remains
a (class of) matrix differential operator in each of its two arguments
$\bp,\bq\in\Gamma\Omega(\xi_\pi)$. 
The symbol $\Gamma_{ij}^k$ represents a class of bi\/-\/differential operators
because they are not uniquely defined. Indeed, they are gauged by the conditions
\begin{equation}\label{NonUnique}   
\sum_{k=1}^N A_k\Bigl(\cEv_{A_i(\bp)}(\bq)\delta^k_j
 -\cEv_{A_j(\bq)}(\bp)\delta^k_i
 +\Gamma^k_{ij}(\bp,\bq)\Bigr)=0,\qquad 
\bp,\bq\in\Gamma\Omega(\xi_\pi).
\end{equation}
We let the r.h.s.\ of~\eqref{NonUnique} be zero if
the sum $\sum_\ell\img A_\ell$ of the images is indecomposable,
meaning that no nontrivial sections commute with all the others:
$\bigl[A_k(\bp),\sum_{\ell=1}^N\img A_\ell
\bigr]=0$ implies that $\bp\in\ker A_k$. For this it is sufficient that
the sum of the images of~$A_\ell$ in $\gothg(\pi)$~is semi\/-\/simple
and the Whitehead lemma holds for it~\cite{Fuchs}.
Otherwise, the right\/-\/hand side of~\eqref{NonUnique} belongs to the
linear subspace of such nontrivial sections.

\begin{example}[see~\cite{Protaras,Leiden}]\label{ExStrongFromLiou}
Consider the Liouville equation $\cELiou=\{u_{xy}=\exp(2u)\}$.
The differential generators of its conservation laws
are 
$w=u_x^2-u_{xx}\in\ker\frac{\Id}{\Id y}{\bigr|}_{\cELiou}$ 
{and} 
$\bar{w}=u_y^2-u_{yy}\in\ker\frac{\Id}{\Id x}{\bigr|}_{\cELiou}$.
%
The operators\footnote{We denote the operators by $\square$ 
and~$\overline{\square}$, following the notation of~
\cite{SymToda,Protaras}, see also references therein.} 
$\square=u_x+\tfrac{1}{2}\tfrac{\Id}{\Id x}$ and 
$\overline{\square}=u_y+\tfrac{1}{2}\tfrac{\Id}{\Id y}$
determine higher symmetries $\vph,\bar{\vph}$ 
of~$\cELiou$ by the formulas 
\[
\vph=\square\bigl(p(x,[w])\bigr),\qquad
\bar{\vph}=\overline{\square}\bigl(\overline{p}(y,[\bar{w}])\bigr)
\]
for any variational covectors 
$p,\overline{p}$.
The images of $\square$ and $\overline{\square}$ 
are closed w.r.t.\ the commutation; for instance,
the bracket~\eqref{EqOplusBKoszul} for~$\square$ contains
$\ib{p}{q}{\square}=\tfrac{\Id}{\Id x}(p)\cdot q-p\cdot 
\tfrac{\Id}{\Id x}(q)$,
and similar for $\overline{\square}$. The two summands in
the symmetry algebra $\sym\cELiou\simeq\img\square\oplus\img\overline{\square}$ 
commute between each other,
$  
[\img\square,\img\overline{\square}]\doteq0
$ on~$\cELiou$. 
The operators $\square$, $\overline{\square}$   
generate the bi\/-\/differential symbols
\begin{align*}
\Gamma_{\square\square}^{\square}&=\ib{\,}{\,}{\square}=
\tfrac{\Id}{\Id x}\otimes\bun-\bun\otimes 
\tfrac{\Id}{\Id x},&
\Gamma_{\overline{\square}\,\overline{\square}}^{\overline{\square}}&=
\ib{\,}{\,}{\overline{\square}}=
\tfrac{\Id}{\Id y}\otimes\bun-\bun\otimes \tfrac{\Id}{\Id y},\\ 
\Gamma_{{\square}\overline{\square}}^{{\square}}&=\tfrac{\Id}{\Id y}
  \otimes\bun,\qquad
\Gamma_{{\square}\overline{\square}}^{\overline{\square}}=-\bun\otimes 
\tfrac{\Id}{\Id x},&
\Gamma_{\overline{\square}{\square}}^{{\square}}&=-\bun\otimes 
\tfrac{\Id}{\Id y},\qquad
\Gamma_{\overline{\square}{\square}}^{\overline{\square}}=\tfrac{\Id}{\Id x}
\otimes\bun,
\end{align*}
where the notation is obvious.
We note that $\Gamma_{{\square}\overline{\square}}^{{\square}}(p,q)\doteq
\Gamma_{{\square}\overline{\square}}^{\overline{\square}}(p,q)\doteq
\Gamma_{\overline{\square}{\square}}^{{\square}}(q,p)\doteq
\Gamma_{\overline{\square}{\square}}^{\overline{\square}}(q,p)\doteq0$
on~$\cELiou$ for any $p(x,[w])$ and~$q(y,[\bar{w}])$.

The matrix operators $\square$,\ $\overline{\square}$ 
are well\/-\/defined~\cite{SymToda} for each 2D~Toda  
chain~$\cEToda$ associated with a semi\/-\/simple complex Lie algebra. They exhibit the same properties as above.
\end{example}

\begin{rem}
The operators~$\square$,\ $\overline{\square}$ yield the involutive distributions
of evolutionary vector fields that are tangent to the \textsl{integral
manifolds}, the 2D~Toda differential equations. Generally,
there is no Frobenius theorem for such distributions.
Still, if the integral manifold exists and is an infinite prolongation of a differential equation~$\cE\subset J^\infty(\pi)$,
then, by construction, this equation admits infinitely many symmetries of the form $\vph=A_i(\bp)$ with free functional parameters $\bp\in\Gamma\Omega(\xi_\pi)$. This property is close but not equivalent to the definition of 
Liouville\/-\/type systems 
(see~\cite{SymToda,Protaras} and references therein).
\end{rem}

The method by which we introduced the symbols~$\Gamma_{ij}^k$ suggests that, under re\-pa\-ra\-met\-ri\-za\-ti\-ons~$g$ in the domain of the operators~\eqref{Ops}, they obey a proper 
analogue of the standard rule
$\Gamma\mapsto g \,\Gamma\, g ^{-1}+\Id g \cdot g ^{-1}$ 
for the connection $1$-\/forms~$\Gamma$.
This is indeed so.

\begin{theor}[Transformations of~$\Gamma_{ij}^k$]\label{GammaIsChristoffel}
Let~$g$ be a reparametrization
$\bp\mapsto\tilde{\bp}= g\bp$, $\bq\mapsto\tilde{\bq}= g\bq$
of sections $\bp,\bq\in\Gamma\Omega(\xi_\pi)$
in the domains\footnote{
Under an invertible 
change $\tilde{w}=\tilde{w}[w]$ of fibre coordinates 
(see Example~\ref{ExStrongFromLiou}), 
the variational covectors are transformed by the inverse of the adjoint linearization 
$g =\bigl[\bigl(\ell_{\tilde{w}}^{(w)}\bigr)^{\dagger}\bigr]^{-1}$,
whereas for va\-ri\-a\-ti\-o\-nal vectors, $g =\ell_{\tilde{w}}^{(w)}$
is the linearization.} 
of strong compatible operators~\eqref{Ops}.    
In this notation, the operators $A_1$,\ $\ldots$,\
$A_N$ 
are transformed by the formula 
$A_i\mapsto\tilde{A}_i=A_i\circ g ^{-1}\bigr|_{w=w[\tilde{w}]}$. 
Then the bi\/-\/differential symbols
$\Gamma_{ij}^k\in\CDiff\bigl(\Gamma\Omega(\xi_\pi)\times\Gamma\Omega(\xi_\pi)\to
\Gamma\Omega(\xi_\pi)\bigr)$ 
are transformed according to the rule
\begin{equation}\label{TransformGamma}
\Gamma_{ij}^k(\bp,\bq)\mapsto
{\tilde{\Gamma}}_{\tilde{\imath}\tilde{\jmath}}^{\tilde{k}}
 \bigl(\tilde{\bp},\tilde{\bq}\bigr)=
\bigl( g \circ\Gamma_{\tilde{\imath}\tilde{\jmath}}^{\tilde{k}}\bigr)
  \bigl( g ^{-1}\tilde{\bp}, g ^{-1}\tilde{\bq}\bigr)+
\delta_{\tilde{\imath}}^{\tilde{k}}\cdot\cEv_{\tilde{A}_{\tilde{\jmath}}
  (\tilde{\bq})}( g )\bigl( g ^{-1}\tilde{\bp}\bigr)-
\delta_{\tilde{\jmath}}^{\tilde{k}}\cdot\cEv_{\tilde{A}_{\tilde{\imath}}
  (\tilde{\bp})}( g )\bigl( g ^{-1}\tilde{\bq}\bigr).
\end{equation}
\end{theor}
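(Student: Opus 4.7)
The plan is to exploit the fact that the transformation law $\tilde{A}_i=A_i\circ g^{-1}$ is precisely designed so that each individual evolutionary vector field is preserved: $\tilde{A}_i(\tilde{\bp})=A_i(g^{-1}\tilde{\bp})=A_i(\bp)$. Consequently the commutator $[\tilde{A}_i(\tilde{\bp}),\tilde{A}_j(\tilde{\bq})]=[A_i(\bp),A_j(\bq)]$ is invariant, and applying the defining identity~\eqref{DefC} in both parametrizations gives
\[
\sum_{k}\tilde{A}_k\bigl(\tilde{\bc}^{\,k}_{ij}(\tilde{\bp},\tilde{\bq})\bigr)
=\sum_{k}A_k\bigl(\bc^{k}_{ij}(\bp,\bq)\bigr)
=\sum_{k}\tilde{A}_k\bigl(g\cdot\bc^{k}_{ij}(\bp,\bq)\bigr).
\]
Modulo the gauge freedom already present in~\eqref{NonUnique}, this forces
$\tilde{\bc}^{\,k}_{ij}(\tilde{\bp},\tilde{\bq})=g\bigl(\bc^{k}_{ij}(g^{-1}\tilde{\bp},g^{-1}\tilde{\bq})\bigr)$; everything that follows lives on the quotient by the linear subspace that realizes that gauge.

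Next, I would substitute the canonical decomposition~\eqref{DefGamma} on both sides. The only nontrivial ingredient is the comparison of the two ``connection\/-\/free'' Lie derivative terms before and after the reparametrization. For this I would invoke the Leibniz rule for the evolutionary derivation acting on the composition of a matrix total differential operator and a section,
\[
\cEv_{X}(g\,\bq)=\cEv_{X}(g)\,(\bq)+g\bigl(\cEv_{X}(\bq)\bigr),
\]
where $\cEv_{X}(g)$ stands for the operator obtained by applying $\cEv_{X}$ coefficient\/-\/wise to $g$. Setting $X=A_i(\bp)=\tilde{A}_i(\tilde{\bp})$ and using $\bq=g^{-1}\tilde{\bq}$ yields
\[
g\bigl(\cEv_{A_i(\bp)}(\bq)\bigr)
=\cEv_{\tilde{A}_i(\tilde{\bp})}(\tilde{\bq})-\cEv_{\tilde{A}_i(\tilde{\bp})}(g)\bigl(g^{-1}\tilde{\bq}\bigr),
\]
and symmetrically with $i\leftrightarrow j$ and $\bp\leftrightarrow\bq$.

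Plugging these two identities into $g\bigl(\bc^{k}_{ij}(\bp,\bq)\bigr)=\tilde{\bc}^{\,k}_{ij}(\tilde{\bp},\tilde{\bq})$ and using~\eqref{DefGamma} on both sides, the two $\cEv_{\tilde{A}}(\tilde{\bullet})$ terms cancel out (they appear identically on the left by Leibniz and on the right by definition), so that
\[
\tilde{\Gamma}^{\tilde{k}}_{\tilde{\imath}\tilde{\jmath}}(\tilde{\bp},\tilde{\bq})
=\bigl(g\circ\Gamma^{\tilde{k}}_{\tilde{\imath}\tilde{\jmath}}\bigr)\bigl(g^{-1}\tilde{\bp},g^{-1}\tilde{\bq}\bigr)
+\delta^{\tilde{k}}_{\tilde{\imath}}\,\cEv_{\tilde{A}_{\tilde{\jmath}}(\tilde{\bq})}(g)\bigl(g^{-1}\tilde{\bp}\bigr)
-\delta^{\tilde{k}}_{\tilde{\jmath}}\,\cEv_{\tilde{A}_{\tilde{\imath}}(\tilde{\bp})}(g)\bigl(g^{-1}\tilde{\bq}\bigr),
\]
which is exactly~\eqref{TransformGamma}; the two correction terms are precisely the bi\/-\/differential analogue of the inhomogeneous piece $\Id g\cdot g^{-1}$ in the standard rule for connection $1$-\/forms.

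The main obstacle is the passage from the equality $\sum_k \tilde{A}_k(\cdots)=\sum_k \tilde{A}_k(\cdots)$ to the equality of arguments, since the collection $\{A_k\}$ need not be jointly injective. My approach is to acknowledge that the formula for $\tilde{\Gamma}$ is a priori defined only up to the same linear subspace as in~\eqref{NonUnique}; when the sum of images $\sum_\ell\img A_\ell$ is indecomposable in the sense described after~\eqref{NonUnique}, the gauge is trivial and the formula holds on the nose, otherwise it holds modulo that gauge. A secondary technical point is the correct interpretation of $\cEv_{\tilde{A}_j(\tilde{\bq})}(g)$ as the evolutionary derivation applied to the (possibly jet\/-\/dependent) coefficients of the matrix differential operator $g$; this is the standard extension of $\cEv$ from $C^\infty(J^\infty(\pi))$ to $\CDiff$, which makes the Leibniz identity above unambiguous.
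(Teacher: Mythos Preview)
Your proposal is correct and follows essentially the same route as the paper: both arguments rest on the invariance of the commutator $[\tilde{A}_i(\tilde{\bp}),\tilde{A}_j(\tilde{\bq})]=[A_i(\bp),A_j(\bq)]$, expand it via~\eqref{DefGamma} in each parametrization, apply the Leibniz rule $\cEv_{X}(g\bq)=\cEv_{X}(g)(\bq)+g\bigl(\cEv_{X}(\bq)\bigr)$, and match terms. The only organisational difference is that you pass through the intermediate relation $\tilde{\bc}^{\,k}_{ij}=g\circ\bc^{k}_{ij}(g^{-1}\cdot,g^{-1}\cdot)$ before decomposing, whereas the paper writes out the $\Gamma$-expansion of both commutators directly; your explicit acknowledgement of the gauge ambiguity~\eqref{NonUnique} at the matching step is in fact more careful than the paper's ``Therefore''.
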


\begin{proof}
Denote $A=A_i$ and $B=A_j$; 
without loss of generality we assume $i=1$ and~$j=2$.
Let us calculate the commutators of vector fields in
the images of $A$ and~$B$ using two systems of coordinates in the domain. 
We equate the commutators straighforwardly, because the fibre coordinates in
the images of the operators are not touched at all. So, we have, originally,
\begin{align*}
\bigl[A(\bp),&B(\bq)\bigr]=B\bigl(\cEv_{A(\bp)}(\bq)\bigr)-A\bigl(\cEv_{B(\bq)}(\bp)\bigr)+
 A\bigl(\Gamma_{AB}^A(\bp,\bq)\bigr)+B\bigl(\Gamma_{AB}^B(\bp,\bq)\bigr)+
 \sum_{k=3}^N A_k\bigl(\Gamma_{AB}^k(\bp,\bq)\bigr).\\
\intertext{On the other hand, we substitute $\tilde{\bp}= g\bp$ and $\tilde{\bq}= g\bq$
in $\bigl[\tilde{A}(\tilde{\bp}),\tilde{B}(\tilde{\bq})\bigr]$, whence, by the Leibnitz
rule, we obtain}
\bigl[\tilde{A}(\tilde{\bp}),&\tilde{B}(\tilde{\bq})\bigr]=
\tilde{B}\bigl({\cEv_{\tilde{A}(\tilde{\bp})}}   
  ( g )(\bq)\bigr)+\bigl({\tilde{B}\circ g }\bigr)  
  \bigl({\cEv_{\tilde{A}(\tilde{\bp})}}(\bq)\bigr)-   
\tilde{A}\bigl({\cEv_{\tilde{B}(\tilde{\bq})}}( g )(\bp)\bigr)- 
  \bigl({\tilde{A}\circ g }\bigr)   
  \bigl({\cEv_{\tilde{B}(\tilde{\bq})}}(\bp)\bigr)\\  
{}&{}
+\bigl(A\circ g ^{-1}\bigr)\bigl(\Gamma_{\tilde{A}\tilde{B}}^{\tilde{A}}( g  \bp, g  \bq)\bigr)
+\bigl(B\circ g ^{-1}\bigr)\bigl(\Gamma_{\tilde{A}\tilde{B}}^{\tilde{B}}( g  \bp, g  \bq)\bigr)
+\sum_{\tilde{k}=3}^N\bigl(
                           {A}_{\tilde{k}}\circ g ^{-1}\bigr)
  \bigl(\Gamma_{\tilde{A}\tilde{B}}^{\tilde{k}}( g  \bp, g  \bq)\bigr).
\end{align*}
Therefore,
\begin{align*}
\Gamma_{AB}^A(\bp,\bq)&=\bigl( g ^{-1}\circ\Gamma_{\tilde{A}\tilde{B}}^{\tilde{A}}\bigr)
  ( g  \bp, g  \bq) -\bigl( g ^{-1}\circ\cEv_{B(\bq)}( g )\bigr)(\bp),\\
\Gamma_{AB}^B(\bp,\bq)&=\bigl( g ^{-1}\circ\Gamma_{\tilde{A}\tilde{B}}^{\tilde{B}}\bigr)
  ( g  \bp, g  \bq) +\bigl( g ^{-1}\circ\cEv_{A(\bp)}( g )\bigr)(\bq),\\
\Gamma_{AB}^k(\bp,\bq)&=\bigl( g ^{-1}\circ\Gamma_{\tilde{A}\tilde{B}}^{
{k}}\bigr)
  ( g  \bp, g  \bq)\qquad\text{for $k\geq3$.}
\end{align*}
Acting by~$ g $ on these equalities and expressing $\bp= g ^{-1}\tilde{\bp}$,
$\bq= g ^{-1}\tilde{\bq}$, we obtain~\eqref{TransformGamma} and
conclude the proof.
\end{proof}

\begin{rem}\label{RemManin}
Within the Hamiltonian formalism, it is very productive to postulate
that the arguments of Hamiltonian operators, the variational covectors,
are \textsl{odd},\footnote{Here 
we assume for simplicity that all fibre coordinates
in~$\pi$ and~$\xi$   
are permutable.} 
see~\cite{Voronov} and~\cite{Topical}. 
Indeed, in this particular situation they  
can be conveniently identified with Cartan $1$-\/forms times the pull\/-\/back
of the volume form $\Id\volume(B^n)$ for the base of the jet bundle.
We preserve this 
\textsl{grading} 
for such domains 
of 
operators
(when $N=1$, we referred to such operators in~\cite{Leiden}
as variational anchors of \textsl{second kind}). 
If, moreover, $\pi$~and $\xi$~are super\/-\/bundles with Grassmann\/-\/valued
sections, then the operators 
become bi\/-\/graded~\cite{Voronov}.
Their proper 
grading is $-1$ 
because their images in~$\gothg(\pi)$ have grading zero,
but the $\BBZ_2$-\/\textsl{parity}, if any, can be arbitrary.
\end{rem}

\begin{cor}\label{CorSymGamma}
For strong compatible 
operators 
whose domain~$\Gamma\Omega(\xi_\pi)$ consists of variational
covectors, 
the 
grading of the arguments equals~$1$.
Therefore, for any $i,j,k\in[1,\ldots,N]$ and for  
any $\bp,\bq\in\Gamma\Omega(\xi_\pi)$
we have that
\begin{equation}\label{GammaGradedSym}
\Gamma^k_{ij}(\bp,\bq)=-\Gamma_{ji}^k(\bq,\bp)
 =(-1)^{|\bp|_{\text{gr}
}\cdot|\bq|_{\text{gr}
}}\cdot\Gamma^k_{ji}(\bq,\bp)
\end{equation}
due to the 
skew\/-\/symmetry of the commutators in~\eqref{CommutClosure}. 
Hence the symbols $\Gamma^k_{ij}$ are symmetric 
in this case. 
\end{cor}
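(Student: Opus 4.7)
The plan is to derive the required relation by unpacking the defining equality \eqref{DefC} in light of the skew-symmetry of the commutator in $\gothg(\pi)$. I would start from the identity
\[
[A_i(\bp),A_j(\bq)] = -[A_j(\bq),A_i(\bp)],
\]
expand both sides by means of \eqref{DefC}, and collect the result in the form
\[
\sum_{k=1}^N A_k\bigl(\bc^k_{ij}(\bp,\bq) + \bc^k_{ji}(\bq,\bp)\bigr) = 0.
\]
Under the indecomposability hypothesis introduced just after \eqref{NonUnique} (no nonzero combined section annihilated by all the $A_\ell$), this forces the pointwise equality $\bc^k_{ij}(\bp,\bq) = -\bc^k_{ji}(\bq,\bp)$ for every index~$k$.

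Next, I would substitute decomposition \eqref{DefGamma} into both sides of this relation. The evolutionary terms $\cEv_{A_i(\bp)}(\bq)\cdot\delta^k_j$ and $-\cEv_{A_j(\bq)}(\bp)\cdot\delta^k_i$ are interchanged under the swap $(i,\bp)\leftrightarrow(j,\bq)$ and absorb the overall sign exactly, so they cancel between the two sides. What remains is precisely the identity $\Gamma^k_{ij}(\bp,\bq) = -\Gamma^k_{ji}(\bq,\bp)$, which is the first equality claimed in~\eqref{GammaGradedSym}.

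The second equality in~\eqref{GammaGradedSym} is then a translation into the language of grading. By Remark~\ref{RemManin}, when the domain consists of variational covectors each argument carries $|\bp|_{\text{gr}}=|\bq|_{\text{gr}}=1$, so the sign $(-1)^{|\bp|_{\text{gr}}\cdot|\bq|_{\text{gr}}}$ specialises to~$-1$ and matches the one produced above. In the graded sense this coincidence of signs is exactly what is called \emph{symmetry} of $\Gamma^k_{ij}$ in the lower indices.

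The one delicate point — and, I expect, the main obstacle — is the gauge non-uniqueness \eqref{NonUnique} of the symbols $\Gamma^k_{ij}$. Without the indecomposability hypothesis one cannot \textquotedblleft strip off\textquotedblright\ the operators $A_k$ from the sum and the conclusion only holds modulo a representative of the kernel class on the right-hand side of~\eqref{NonUnique}; one then either invokes the semisimplicity plus Whitehead lemma argument already used in the paper, or states~\eqref{GammaGradedSym} at the level of equivalence classes of bi-differential operators. Once this convention is fixed, the remaining calculation is the routine sign bookkeeping sketched above.
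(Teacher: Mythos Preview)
Your proposal is correct and follows exactly the approach the paper intends: the paper gives no separate proof for this corollary, stating only that~\eqref{GammaGradedSym} holds ``due to the skew\/-\/symmetry of the commutators in~\eqref{CommutClosure},'' and your argument is precisely the detailed unpacking of that sentence via~\eqref{DefC} and~\eqref{DefGamma}. Your explicit treatment of the gauge ambiguity~\eqref{NonUnique} is more careful than the paper's terse statement, but is fully consistent with the convention fixed there.
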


\begin{state}
If two normal operators $A_i$ and $A_j$ are simultaneously
linear and strong compatible,
then their `individual' 
brackets $\Gamma^i_{ii}$ and $\Gamma^j_{jj}$ are
\[ 
\ib{\bp}{\bq}{A_i}=\Gamma^j_{ij}(\bp,\bq)+\Gamma^j_{ji}(\bp,\bq)
\quad \text{and}\quad
\ib{\bp}{\bq}{A_j}=\Gamma^i_{ij}(\bp,\bq)+\Gamma^i_{ji}(\bp,\bq)
\]
for any~$\bp,\bq\in\Gamma\Omega(\xi_\pi)$.
\end{state}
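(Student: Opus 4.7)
The plan is to put both hypotheses to work simultaneously by studying $A_{\boldsymbol{\lambda}}=\lambda A_i+\mu A_j$. By linear compatibility and Theorem~\ref{ThLinear}, $A_{\boldsymbol{\lambda}}$ is itself involutive and satisfies $\ib{\bp}{\bq}{A_{\boldsymbol{\lambda}}}=\lambda\,\ib{\bp}{\bq}{A_i}+\mu\,\ib{\bp}{\bq}{A_j}$. The strategy is to compute the single commutator $[A_{\boldsymbol{\lambda}}(\bp),A_{\boldsymbol{\lambda}}(\bq)]$ in two ways and match coefficients of $\lambda\mu$.

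First, I would apply~\eqref{EqOplusBKoszul} directly to $A_{\boldsymbol{\lambda}}$, which yields a decomposition whose $\lambda\mu$-component is
\[
A_i\!\bigl(\cEv_{A_j(\bp)}(\bq)-\cEv_{A_j(\bq)}(\bp)+\ib{\bp}{\bq}{A_j}\bigr)+A_j\!\bigl(\cEv_{A_i(\bp)}(\bq)-\cEv_{A_i(\bq)}(\bp)+\ib{\bp}{\bq}{A_i}\bigr).
\]
Second, I would expand the same commutator bilinearly, using strong compatibility~\eqref{DefC} together with~\eqref{DefGamma} on each of the four pairings $[A_{i'}(\bp),A_{j'}(\bq)]$ for $i',j'\in\{i,j\}$. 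The $\lambda\mu$-coefficient of this expansion is $[A_i(\bp),A_j(\bq)]+[A_j(\bp),A_i(\bq)]$ and, after plugging in the definitions of $\bc^k_{ij}$ and $\bc^k_{ji}$, it organizes into the same $\cEv$-part as above plus an extra piece $A_i(\Gamma^i_{ij}+\Gamma^i_{ji})+A_j(\Gamma^j_{ij}+\Gamma^j_{ji})$.

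Equating the two expressions, the evolutionary pieces cancel and there remains
\[
A_i\!\bigl(\Gamma^i_{ij}(\bp,\bq)+\Gamma^i_{ji}(\bp,\bq)-\ib{\bp}{\bq}{A_j}\bigr)+A_j\!\bigl(\Gamma^j_{ij}(\bp,\bq)+\Gamma^j_{ji}(\bp,\bq)-\ib{\bp}{\bq}{A_i}\bigr)=0.
\]
The main obstacle now is to split this single equation into the two stated identities. Here I would fix the gauge ambiguity~\eqref{NonUnique} by invoking the indecomposability convention adopted just after it, so that each of the two bracketed expressions is separately annihilated by the corresponding operator. The normality of $A_i$ and of $A_j$ then kills both bracketed expressions as bi\/-\/differential operators in $(\bp,\bq)$, exactly in the style of the final step of the proof of Theorem~\ref{ThLinear} (where $\gamma_\ell\in\ker A_\ell$ was upgraded to $\gamma_\ell=0$). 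This delivers $\Gamma^j_{ij}+\Gamma^j_{ji}=\ib{\,}{\,}{A_i}$ and $\Gamma^i_{ij}+\Gamma^i_{ji}=\ib{\,}{\,}{A_j}$, as required. I expect the separation step to be the subtle point—without the gauge fixing the identities hold only modulo the nonuniqueness described in~\eqref{NonUnique}.
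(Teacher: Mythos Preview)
Your argument is correct and coincides with the paper's proof: both compute $[A_{\boldsymbol{\lambda}}(\bp),A_{\boldsymbol{\lambda}}(\bq)]$ via~\eqref{EqOplusBKoszul} and via the bilinear expansion with~\eqref{DefC}--\eqref{DefGamma}, then equate the $\lambda\mu$-coefficients. The only cosmetic difference is that the paper writes the cross term as $[A(\bp),B(\bq)]-[A(\bq),B(\bp)]$ and then invokes the skew\/-\/symmetry $\Gamma^k_{ij}(\bq,\bp)=-\Gamma^k_{ji}(\bp,\bq)$ from~\eqref{GammaGradedSym} to reach $\Gamma^k_{ij}+\Gamma^k_{ji}$, whereas you obtain that sum directly by expanding $[A_i(\bp),A_j(\bq)]+[A_j(\bp),A_i(\bq)]$; your bookkeeping is slightly more direct. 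Your caution about the separation step is well placed: the paper simply says ``we isolate the arguments of the operators,'' which tacitly uses exactly the gauge\/-\/fixing plus normality argument you spell out.
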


\begin{proof}
For brevity, denote $A=A_i$, $B=A_j$ and consider the 
linear combination~$\mu A+\nu B$; by assumption, its image is closed under commutation. 
By Theorem~\ref{ThLinear}, we have
\begin{multline*}
\bigl(\mu A+\nu B\bigr)\bigl(\ib{\bp}{\bq}{\mu A+\nu B}\bigr)={}\\
 {}=\mu^2A\bigl(\ib{\bp}{\bq}{A}\bigr)+\mu\nu\cdot A\bigl(\ib{\bp}{\bq}{B}\bigr)
 +\mu\nu\cdot B\bigl(\ib{\bp}{\bq}{A}\bigr)+\nu^2B\bigl(\ib{\bp}{\bq}{A}\bigr).
\end{multline*}
On the other hand,
\begin{multline*}
\Bigl[\bigl(\mu A+\nu B\bigr)(\bp),\bigl(\mu A+\nu B\bigr)(\bq)\Bigr]\\
{}=\mu^2\bigl[A(\bp),A(\bq)\bigr]+\mu\nu\bigl[A(\bp),B(\bq)\bigr]
   -\mu\nu\bigl[A(\bq),B(\bp)\bigr]+\nu^2\bigl[B(\bp),B(\bq)\bigr].
\end{multline*}
Taking into account~\eqref{DefGamma} and equating the coefficients of~$\mu\nu$,
we obtain
\[
A\bigl(\ib{\bp}{\bq}{B}\bigr)+B\bigl(\ib{\bp}{\bq}{A}\bigr)=
 A\bigl(\Gamma_{AB}^A(\bp,\bq)\bigr)+B\bigl(\Gamma_{AB}^B(\bp,\bq)\bigr)-
 A\bigl(\Gamma_{AB}^A(\bq,\bp)\bigr)-B\bigl(\Gamma_{AB}^B(\bq,\bp)\bigr).
\]
Using the 
formulas $\Gamma_{AB}^A(\bq,\bp)=-\Gamma_{BA}^A(\bp,\bq)$ and
$\Gamma_{AB}^B(\bq,\bp)=-\Gamma_{BA}^B(\bp,\bq)$, see~\eqref{GammaGradedSym},
we isolate the arguments of the operators and obtain the assertion.
\end{proof}

\section*{Conclusion}
\noindent%
For every $\Bbbk$-\/vector space $V$, the space of endomorphisms 
$\End_\Bbbk(V)$ is a monoid 
with respect to the composition~$\circ$. In this context, one can study relations between recursion operators.
For instance, the structural relations for
recursion operators of the Krichever\/--\/Novikov equations are described by hyperelliptic curves, see~\cite{SokDemskojKN}.
Likewise, we have the relation $R_1\circ R_2-R_2\circ R_1=R_1^2$ between two recursions for the dispersionless $3$-\/component Boussinesq system, see~\cite{JK3Bous}.
Simultaneously, the space of endomorphisms carries the structure of a Lie algebra, which is given by the formula $[R_i,R_j]=R_i\circ R_j-R_j\circ R_i$ for every $R_i,R_j\in\End_\Bbbk(V)$. 

In this paper we proceed further and consider the class of structures on the linear spaces of total differential operators that, generally, do not in principle admit any associative composition. (The bracket of recursion operators that appears through~\eqref{DefC} is different from the Richardson\/--\/Nijenhuis bracket~\cite{Opava}, although we use similar geometric techniques.)
The classification problem for such algebras of operators is completely open.

\subsection*{Discussion}
We performed all the reasonings for 
local differential operators in a purely commutative setup;
all the structures were defined 
on the empty jet spaces. A rigorous 
extension of these objects to $\BBZ_2$-\/graded 
nonlocal operators on 
differential equations is a separate problem for future research.
In addition, the use of difference 
operators subject to~\eqref{CommutClosure}
can be a fruitful idea \emph{au d\'ebut} 
for discretization of integrable systems with free
functional parameters in the symmetries 
(e.g., Toda\/-\/like 
difference systems~\cite{Suris}). 

\subsection*{Acknowledgements}
This work has been partially supported by the European Union through
the FP6 Marie Curie RTN 
{ENIGMA} (Contract
no.\,MRTN-CT-2004-5652), the European Science Foundation Program
{MISGAM}, and by NWO grants~B61--609 and VENI~639.031.623. 
%
%
A.\,K.\ thanks Max Planck Institute for Mathematics (Bonn) 
and SISSA
for financial support and warm hospitality.

\end{document}